\documentclass[conference]{IEEEtran}
\IEEEoverridecommandlockouts

\usepackage{cite}
\usepackage{amsmath,amssymb,amsfonts}
\usepackage{amsthm}
\usepackage{booktabs}
\usepackage{graphicx}
\usepackage{array}
\usepackage{url}
\usepackage{xcolor}

\usepackage[utf8]{inputenc}
\usepackage[T1]{fontenc}
\usepackage{microtype}

\usepackage{algorithm}
\usepackage{algpseudocode}
\usepackage{multirow}

\newtheorem{definition}{Definition}
\newtheorem{proposition}{Proposition}
\newtheorem{theorem}{Theorem}

\newcommand{\Hilb}{\mathcal{H}}
\newcommand{\Tr}{\operatorname{Tr}}
\newcommand{\QLTM}{QL-TM}
\newcommand{\rhoin}{\rho}

\begin{document}

\title{Quantum-Logic Tsetlin Machines:\\Interpretable Quantum Machine Learning with Commuting Projector Clauses}

\author{\IEEEauthorblockN{Krishna Bhatia}
\IEEEauthorblockA{QuantumAI Lab, Fractal Analytics, Mumbai, India}}

\maketitle

\begin{abstract}
Tsetlin Machines learn propositional clauses over Boolean literals using teams of finite-state automata. This paper introduces a quantum-logic analogue, the Quantum-Logic Tsetlin Machine (\QLTM), in which literals are quantum propositions represented by projection operators and clauses are conjunctions inside commuting measurement contexts. The activation of a clause on a quantum state is the Born probability of the corresponding joint projector. This gives a hybrid, interpretable quantum machine-learning model: the automata remain classical, but the learned rules are quantum propositions over states rather than post-measurement bitstrings. We prove that the model reduces to an ordinary Boolean Tsetlin Machine in diagonal contexts, and that Pauli projector clauses recover stabilizer and syndrome semantics. Experiments on Bell states, phase-flip syndrome states, randomized stabilizer-context tasks, mixed literal pools, and finite-shot noisy simulations show that QL projector clauses attain 1.000 accuracy in the exact/noiseless controlled settings, while diagonal or wrong-context baselines lose the relevant phase or syndrome information and remain far below the true context. In 16-class random stabilizer tasks, the true QL context attains 1.000 accuracy for five- and six-qubit systems, while wrong QL contexts remain near the 0.0625 chance level. Mixed-pool and context-budget ablations further show that the learner recovers all true context generators in the presence of distractor projectors, and that accuracy degrades predictably when true quantum propositions are removed. The contribution is not a claim of quantum advantage, but a controlled bridge between Tsetlin clause learning, quantum logic, and interpretable quantum machine learning.
\end{abstract}

\begin{IEEEkeywords}
Tsetlin Machine, quantum machine learning, quantum logic, projector clauses, stabilizer formalism, interpretable learning.
\end{IEEEkeywords}

\section{Introduction}

Tsetlin Machines (TMs) are logic-based machine-learning models in which classification is expressed through propositional clauses learned by Tsetlin automata~\cite{Granmo2018TM}. Their appeal is that the learned model is not a dense numerical representation, but a set of human-readable rules such as conjunctions over Boolean literals. This makes TMs attractive in settings where interpretability, discrete learning, and hardware-near implementation matter.

Quantum machine learning (QML), in contrast, usually represents information through quantum states, measurements, kernels, or variational circuits~\cite{Biamonte2017QML,Schuld2019Feature,Havlicek2019Kernel,Cerezo2021VQA}. These models can exploit quantum state spaces, but their learned objects are often less directly logical. This raises a natural question: if a classical TM learns Boolean propositions, what should a TM-like model learn when the input is a quantum state?

A direct ``fully quantum'' Tsetlin Machine is difficult to define. Classical TM learning uses irreversible reward and penalty updates, while closed quantum dynamics is unitary unless measurement, feedback, or coupling to an environment is introduced. We therefore take a conservative route: we quantize the \emph{logic} of the TM rather than the automata memory. In the quantum-logic view of Birkhoff and von Neumann, propositions about a quantum system correspond to closed subspaces of Hilbert space, equivalently projectors~\cite{Birkhoff1936QuantumLogic,WilceQuantumLogic}. This suggests replacing Boolean literals by projector literals.

The resulting model, the Quantum-Logic Tsetlin Machine (\QLTM), is a hybrid QML model. Its inputs are quantum states, its literals are projective measurement propositions, and its clauses are commuting conjunctions of such propositions. The Tsetlin automata remain classical finite-state learners that include or exclude literals. Fig.~\ref{fig:pipeline} summarizes the model. The model is not a quantum Turing machine~\cite{Deutsch1985QuantumComputer}; the acronym \QLTM{} is used here only for ``Quantum-Logic Tsetlin Machine.''

\begin{figure*}[t]
    \centering
    \includegraphics[width=\textwidth]{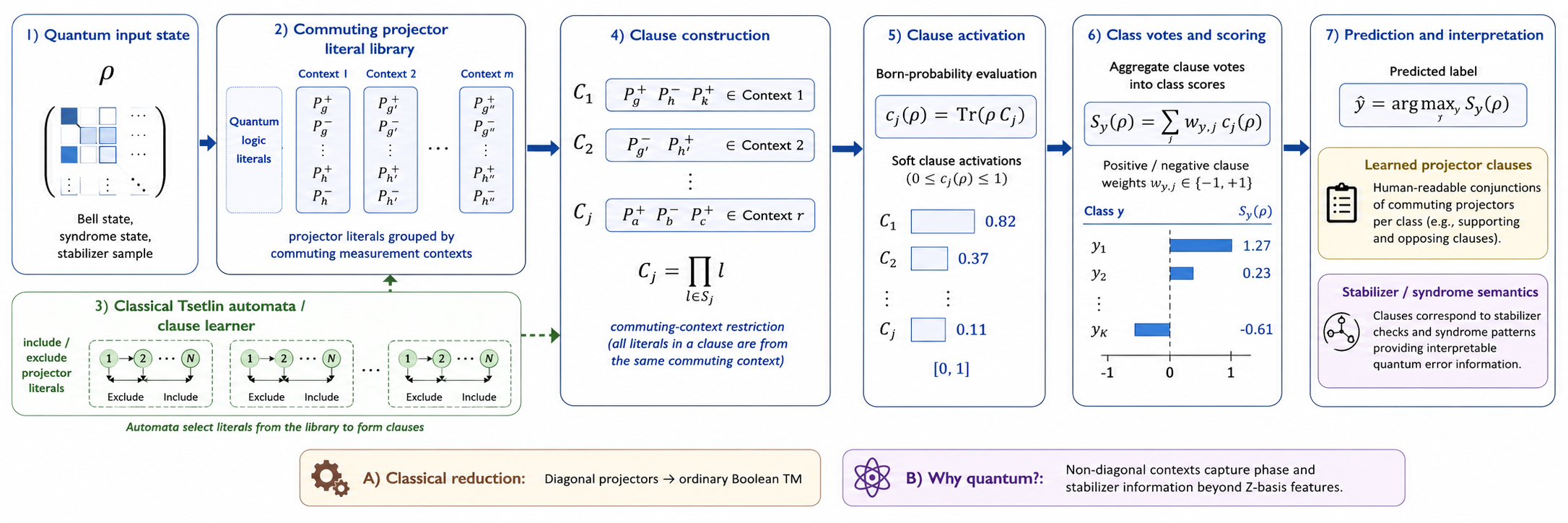}
    \caption{QL-TM pipeline. A quantum state is evaluated through a library of commuting projector literals. Automata select literals into clauses, each clause produces a Born-probability activation, and class scores are formed from clause votes.}
    \label{fig:pipeline}
\end{figure*}

\textbf{Contributions.} This paper makes four contributions:
\begin{enumerate}
    \item It defines TM clauses over quantum projectors, with Born-probability activations and an explicit commuting-context restriction.
    \item It proves a classical-reduction property: when all projectors are diagonal in a common computational basis, QL-TM clauses reduce exactly to Boolean TM clauses.
    \item It connects Pauli-projector clauses to stabilizer and syndrome rules, giving a physical interpretation of learned clauses.
    \item It provides controlled experiments showing contextual expressivity, randomized stabilizer generalization, recovery of true projectors among distractors, predictable degradation under context-budget removal, and graceful degradation under finite-shot noise.
\end{enumerate}

The goal is deliberately modest but foundational: define a mathematically controlled quantum-logic generalization of Tsetlin clause learning and test whether the resulting clauses recover physically meaningful quantum propositions.

\section{Background and Related Work}

\subsection{Tsetlin Machines}

A classical TM learns a set of clauses over Boolean literals $x_i$ and $\neg x_i$~\cite{Granmo2018TM}. Each clause is a conjunction of included literals. Teams of Tsetlin automata decide whether each literal is included or excluded, and Type I/Type II feedback drives clauses toward discriminative propositional patterns. The learned representation is therefore symbolic: a classifier can be inspected as a weighted collection of logical rules. Classical TM software ecosystems such as TMU provide scalable implementations and variants~\cite{TMU2025}; our experimental implementation is independent and minimal because the literal semantics here are quantum projectors rather than ordinary Boolean variables.

\subsection{Quantum Logic and Projectors}

In quantum mechanics, a proposition such as ``observable $A$ has value in set $S$'' is represented by a projector onto the corresponding subspace. The set of all such propositions is not generally Boolean: projectors need not commute, and the lattice of quantum propositions is generally non-distributive~\cite{Birkhoff1936QuantumLogic,WilceQuantumLogic}. This is exactly where a QL-TM must differ from a classical TM. An arbitrary conjunction of non-commuting quantum propositions is not a simple Boolean AND. We therefore restrict each clause to a commuting context, where products of projectors are again projectors and the usual conjunction semantics are well defined.

\subsection{Pauli Projectors and Stabilizers}

For a Pauli observable $g$ with eigenvalues $\pm 1$, the positive and negative projector literals are
\begin{equation}
    P_g^{\pm}=\frac{I \pm g}{2}.
\end{equation}
If $g_1,\ldots,g_k$ commute, the product
\begin{equation}
    C_s = \prod_{i=1}^{k} P_{g_i}^{s_i}, \quad s_i\in\{+,-\},
\end{equation}
projects onto a joint eigenspace. This is the same semantic structure used by stabilizer codes, syndrome measurements, and quantum error correction~\cite{NielsenChuang2010,Gottesman1997Stabilizer,Aaronson2004CHP,Calderbank1996GoodQuantumCodes,Steane1996ErrorCorrecting}. A learned QL-TM clause can therefore be read as a stabilizer-like proposition: it identifies a syndrome pattern, not merely a numerical feature vector.

\subsection{Relation to Existing QML Models and Research Gap}

Variational classifiers and quantum kernels learn continuous circuit/readout parameters or implicit similarities rather than explicit logical propositions~\cite{Biamonte2017QML,Schuld2019Feature,Havlicek2019Kernel,Cerezo2021VQA}. Our question is different: can TM clause semantics be lifted from Boolean events to quantum propositions while preserving an interpretable include/exclude learner? The baselines therefore separate measurement-context information, projector-feature expressivity, and clause recovery. We do not claim predictive superiority over VQCs or quantum kernels; end-to-end comparisons on less structured state-classification tasks are future work. Prior quantum-adjacent TM work also does not, to our knowledge, define TM literals themselves as Hilbert-space projectors. In \QLTM, the automata remain classical while the learned clauses become explicit commuting quantum propositions.

\section{Quantum-Logic Tsetlin Machine}

Let $\Hilb$ be a finite-dimensional Hilbert space and let $\rhoin$ be a density operator on $\Hilb$. A quantum proposition is represented by a projector $P=P^2=P^\dagger$. Its complement is $P^\perp=I-P$.

\begin{definition}[Projector literal]
Given a base projector $P_i$, a QL-TM literal is either $P_i$, its complement $I-P_i$, or the excluded action $\varnothing$.
\end{definition}

\begin{definition}[Commuting context]
A context is a set of projectors $\mathcal{C}=\{P_1,\ldots,P_m\}$ such that $[P_i,P_j]=0$ for all $i,j$.
\end{definition}

\begin{definition}[Projector clause]
A projector clause is a product
\begin{equation}
    C_j = \prod_{i\in S_j} L_{ji}, \qquad L_{ji}\in \{P_i,I-P_i\},
    \label{eq:clause}
\end{equation}
where all selected literals belong to a common commuting context. Its activation on input state $\rhoin$ is
\begin{equation}
    c_j(\rhoin)=\Tr(\rhoin C_j) \in [0,1].
    \label{eq:activation}
\end{equation}
\end{definition}

Thus the binary output of a classical Boolean clause becomes a Born probability. In an exact simulation, $c_j(\rhoin)$ can be used directly as a soft clause vote. In an experimental implementation, $c_j(\rhoin)$ can be estimated by repeated measurements in the clause context.

For multiclass classification, class $y$ has a signed set of clauses, and the score is
\begin{equation}
    S_y(\rhoin) = \sum_{j} w_{yj} c_j(\rhoin), \qquad w_{yj}\in\{-1,+1\},
    \label{eq:score}
\end{equation}
with prediction $\hat y=\arg\max_y S_y(\rhoin)$. In the prototype used here, each class is represented by one positive clause, which is sufficient for the stabilizer-style tasks considered below.

\subsection{Automata Learning Interface}

For each class $y$, clause $j$, and signed projector literal $L_i$, the learner stores an automaton $a_{yji}\in\{1,\ldots,2N\}$ and includes $L_i$ iff $a_{yji}>N$. Signed literals $P_g^+$ and $P_g^-$ are represented separately, with conflict resolution preventing both signs of the same observable from remaining in one clause. Each literal has soft truth value
\begin{equation}
    p_i(\rho)=\Tr(\rho L_i)\in[0,1],
\end{equation}
and the included set $S_{yj}=\{i:a_{yji}>N\}$ defines $C_{yj}=\prod_{i\in S_{yj}}L_i$.

The formal clause activation is the joint Born probability $c_{yj}(\rho)=\Tr(\rho C_{yj})$. The minimal learner uses
\begin{equation}
    \tilde c_{yj}(\rho)=\prod_{i\in S_{yj}}\Tr(\rho L_i)
\end{equation}
only as a lightweight training surrogate. In general $\Tr(\rho\prod_iL_i)\neq\prod_i\Tr(\rho L_i)$, even for commuting projectors, so no independence assumption is made. For the exact stabilizer-eigenstate tasks central to our representational claims, relevant signed stabilizer projectors have probabilities in $\{0,1\}$; hence the surrogate and joint activation induce the same binary clause ordering. This equivalence is not guaranteed for finite-shot, noisy, or general mixed states, where direct joint-context estimation is a natural extension.

Algorithm~\ref{alg:training} states the update used in the experiments. Target-class clauses receive Type-I-like feedback: high-probability literals are rewarded by moving their automata toward the include region. Non-target clauses that fire receive Type-II-like feedback: the learner searches for a low-probability literal that suppresses the non-target clause and rewards inclusion of that suppressing literal. Same-observable sign conflicts are resolved after each update.

\begin{algorithm}[t]
\caption{Minimal QL-TM training rule}
\label{alg:training}
\footnotesize
\begin{algorithmic}[1]
\Require $\mathcal D=\{(\rho,y)\}$; literals $\{L_i\}$ by context; states $a_{yji}$; threshold $\tau$; epochs $E$
\Ensure Learned literal sets $S_{yj}$

\State Initialize automata near the include/exclude boundary.

\For{$e=1,\ldots,E$}
  \ForAll{$(\rho,y)\in\mathcal D$}
    \State Estimate $p_i=\Tr(\rho L_i)$ in the selected contexts.

    \ForAll{classes $y'$ and clauses $j$}
      \State $S_{y'j}\gets\{i:a_{y'ji}>N\}$, restricted to one context.
      \State $\tilde c_{y'j}\gets\prod_{i\in S_{y'j}}p_i$.
    \EndFor

    \ForAll{target clauses $j$ and candidate literals $L_i$}
      \State If $p_i\ge\tau$, move $a_{yji}$ toward Include; optionally penalize low-$p_i$ included literals.
    \EndFor

    \ForAll{$y'\neq y$ and firing clauses $j$ with $\tilde c_{y'j}>\tau$}
      \State Move a compatible low-probability suppressor $a_{y'jq}$ toward Include.
    \EndFor

    \State Resolve $P_g^+/P_g^-$ conflicts by retaining the stronger inclusion state.
  \EndFor
\EndFor

\State Predict with $c_{yj}(\rho)=\Tr(\rho C_{yj})$,
$S_y(\rho)=\sum_jw_{yj}c_{yj}(\rho)$, and
$\hat y=\arg\max_y S_y(\rho)$.

\end{algorithmic}
\end{algorithm}

\paragraph{Relation to canonical TM feedback.}
The prototype is intentionally TM-style rather than a complete reproduction of production score-gated TM feedback. Target updates depend locally on projector probabilities, non-target suppression is triggered by $\tilde c_{yj}$, and aggregate class scores are used for prediction. We therefore use ``Type-I-like'' and ``Type-II-like'' deliberately. The semantic change is precise: Boolean satisfaction $x_i\in\{0,1\}$ is replaced by projector satisfaction $p_i(\rho)=\Tr(\rho L_i)\in[0,1]$, while discrete include/exclude automata are retained.

The experiments use \emph{ProjectorClauseMiner}, a transparent margin-based clause learner, and \emph{TinyProjectorTsetlin}, the minimal automata prototype above. TinyProjectorTsetlin is not a drop-in replacement for production TM libraries; it tests whether TM-style automata can recover meaningful quantum-logic clauses. Quantum mechanics supplies states and measurement propositions, while classical automata perform clause search; commuting-context clauses can in principle be estimated from compatible measurement counts.

\begin{figure*}[t]
    \centering
    \includegraphics[width=\textwidth]{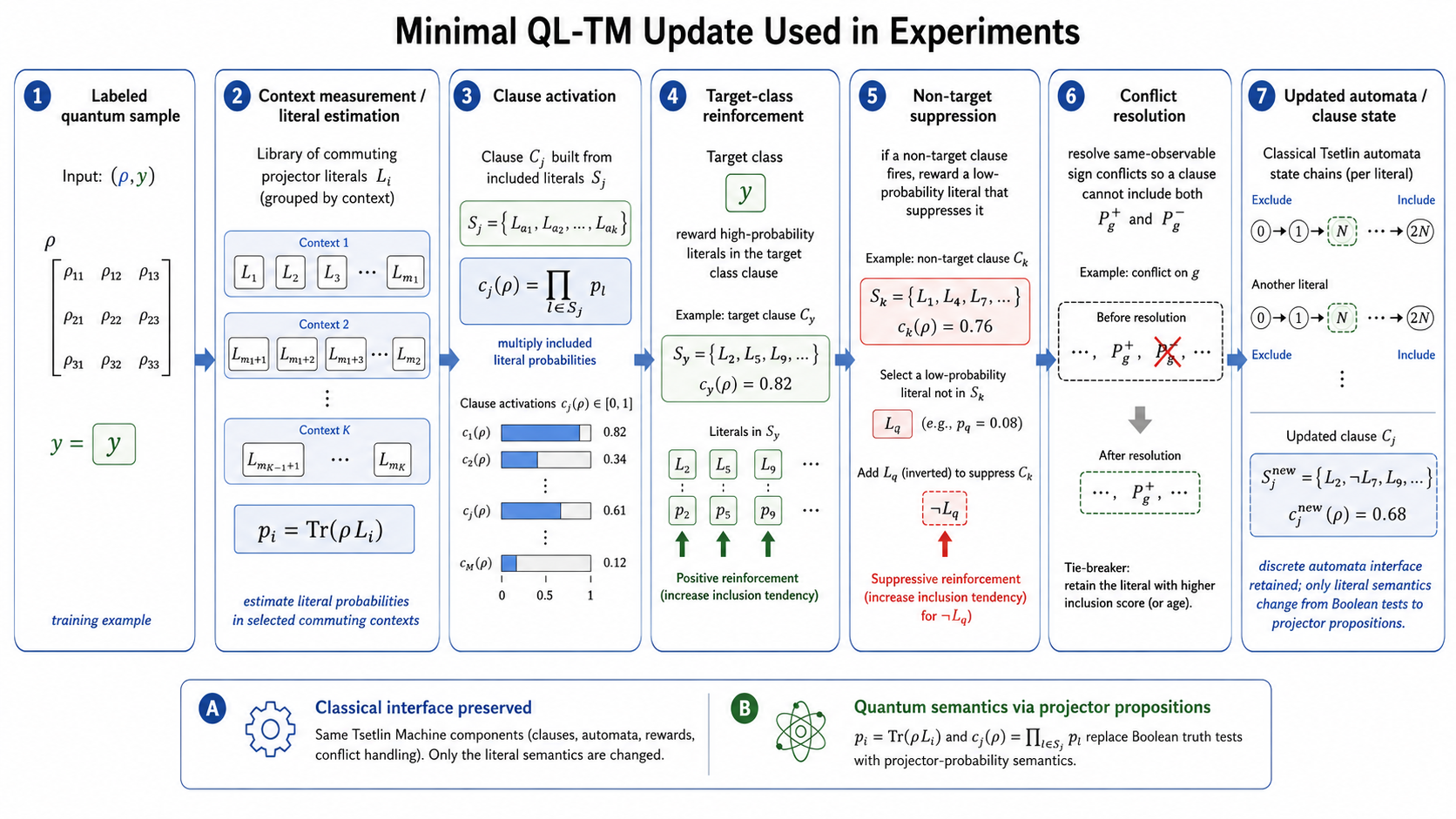}
    \caption{Minimal QL-TM update used in the experiments. Classical finite-state include/exclude automata are retained while Boolean literal truth values are replaced by projector Born probabilities within commuting contexts.}
    \label{fig:update}
\end{figure*}

\subsection{Classical Reduction}

\begin{theorem}[Diagonal classical reduction]
Suppose all base projectors $P_i$ are diagonal in the computational basis and correspond to events $x_i=1$ for Boolean variables $x_i\in\{0,1\}$. For any computational-basis state $|x\rangle$, every QL-TM clause activation equals the value of the corresponding Boolean conjunction:
\begin{equation}
    \Tr(|x\rangle\langle x| C_j) = \bigwedge_{i\in S_j} \ell_i(x),
\end{equation}
where $\ell_i(x)$ is either $x_i$ or $\neg x_i$.
\end{theorem}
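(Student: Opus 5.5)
The plan is to make the hypothesis concrete and then show that every object in the clause is diagonal, so the trace reduces to a single diagonal entry. First I interpret ``$P_i$ is diagonal in the computational basis and corresponds to the event $x_i=1$'' as
\begin{equation}
    P_i=\sum_{x\in\{0,1\}^n} x_i\,|x\rangle\langle x| .
\end{equation}
Equivalently, $P_i|x\rangle = x_i|x\rangle$ for every basis string $x$. Then the complement satisfies $(I-P_i)|x\rangle=(1-x_i)|x\rangle=(\neg x_i)|x\rangle$. So each literal $L_{ji}\in\{P_i,I-P_i\}$ acts on $|x\rangle$ as multiplication by the Boolean value $\ell_i(x)\in\{0,1\}$, with $\ell_i(x)=x_i$ or $\neg x_i$ according to which literal is included.

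Next I check that the commuting-context requirement of the clause definition holds automatically. All $P_i$ and $I-P_i$ are diagonal in one common basis, so they pairwise commute. Hence the product $C_j=\prod_{i\in S_j}L_{ji}$ in \eqref{eq:clause} is well defined independent of ordering. Moreover, $C_j$ is itself diagonal, with eigenvalue on $|x\rangle$ obtained by applying the literals one at a time:
\begin{equation}
    C_j|x\rangle=\Bigl(\prod_{i\in S_j}\ell_i(x)\Bigr)|x\rangle .
\end{equation}
This step is a short induction on $|S_j|$.

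Then I evaluate the activation \eqref{eq:activation} for $\rhoin=|x\rangle\langle x|$. By cyclicity of the trace, $\Tr(|x\rangle\langle x|C_j)=\langle x|C_j|x\rangle=\prod_{i\in S_j}\ell_i(x)$. Finally, for values in $\{0,1\}$ the product equals the logical conjunction, so the activation is $\bigwedge_{i\in S_j}\ell_i(x)$. For the edge case $S_j=\varnothing$ the empty product is $I$, so the activation is $1$, which matches the empty conjunction being true.

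There is no real obstacle here. The only point requiring care is the formalization of ``corresponds to events $x_i=1$'' as the spectral form above, and I will state that explicitly as the reading of the hypothesis. It is also worth remarking that the argument needs $\rhoin$ to be a basis state. For a diagonal mixed state $\sum_x q_x|x\rangle\langle x|$, the same computation gives the probability $\sum_x q_x\bigwedge_i\ell_i(x)$ that the Boolean clause holds. This is the natural soft extension, but it lies outside the claimed statement.
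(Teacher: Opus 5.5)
Your proposal is correct and follows essentially the same route as the paper: diagonal projectors commute and act on $|x\rangle$ as $\{0,1\}$-valued indicators, complements act as negations, and so the diagonal entry $\langle x|C_j|x\rangle$ of the product is the Boolean AND. You make the paper's short argument explicit (the spectral form $P_i=\sum_x x_i|x\rangle\langle x|$, the trace as a single diagonal entry, and the empty-clause case), which the paper leaves implicit.
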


\begin{proof}
Since all projectors are diagonal, they commute and act as indicator functions on computational-basis states. A product of selected projectors is therefore the product of the corresponding indicators, which is exactly Boolean AND. Complements $I-P_i$ act as negated literals. Hence the QL-TM score reduces to a classical TM score over Boolean literals.
\end{proof}

This theorem is important for positioning: QL-TM does not discard the classical TM. It extends its clause semantics from Boolean events to commuting quantum propositions.

\subsection{Pauli-Stabilizer Clause Semantics}

\begin{proposition}[Stabilizer clause]
Let $g_1,\ldots,g_k$ be commuting Pauli operators and let $s_i\in\{+1,-1\}$. The clause
\begin{equation}
    C_s = \prod_{i=1}^k \frac{I+s_i g_i}{2}
\end{equation}
projects onto the joint eigenspace with eigenvalue $s_i$ for each $g_i$.
\end{proposition}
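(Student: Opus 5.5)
We prove that $C_s$ is an orthogonal projector and then identify its range with the joint eigenspace
\[
E_s=\{\,|\psi\rangle : g_i|\psi\rangle = s_i|\psi\rangle \ \text{for all } i\,\}.
\]

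\textbf{Step 1: each factor is the projector onto one eigenspace.} A Pauli operator $g_i$ (a tensor product of Paulis with a phase making it Hermitian) satisfies $g_i=g_i^\dagger$ and $g_i^2=I$. Its eigenvalues are therefore $\pm1$. Set $Q_i=(I+s_ig_i)/2$. Using $s_i^2=1$, we get
\[
Q_i^2=\frac{I+2s_ig_i+g_i^2}{4}=Q_i, \qquad Q_i^\dagger=Q_i .
\]
So $Q_i$ is an orthogonal projector. From the spectral decomposition of $g_i$, its range is the $s_i$-eigenspace $\ker(g_i-s_iI)$.

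\textbf{Step 2: the product is a projector.} The $g_i$ commute, so the $Q_i$ commute as well. The product of commuting orthogonal projectors is Hermitian, since reversing the order of the factors does not change it. It is also idempotent: we may reorder $C_sC_s$ to group each $Q_i^2=Q_i$. This is the same commuting-context fact the paper uses when defining clauses.

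\textbf{Step 3: identify the range.} We show $\operatorname{ran} C_s = E_s$ by two inclusions.

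\emph{$E_s\subseteq \operatorname{ran} C_s$.} If $|\psi\rangle\in E_s$, then $Q_i|\psi\rangle=|\psi\rangle$ for every $i$. Hence $C_s|\psi\rangle=|\psi\rangle$.

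\emph{$\operatorname{ran} C_s\subseteq E_s$.} Take $|\phi\rangle=C_s|\chi\rangle$. For any $i$, commutativity lets us move $Q_i$ to the front of the product, giving $|\phi\rangle=Q_i(\cdots)|\chi\rangle\in\operatorname{ran}Q_i$. So $g_i|\phi\rangle=s_i|\phi\rangle$ for each $i$, which means $|\phi\rangle\in E_s$.

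An idempotent Hermitian operator is the orthogonal projector onto its range. Therefore $C_s$ projects onto $E_s$.

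The only real obstacle is Step 3, specifically the inclusion $\operatorname{ran} C_s\subseteq E_s$, where commutativity is essential. Without it the product need not be a projector, and its range need not be an intersection of eigenspaces. We resolve this by commuting each $Q_i$ to the leftmost position before applying it.

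We also note that $E_s$ may be $\{0\}$, for instance if the $g_i$ are dependent with incompatible signs, such as $g_2=-g_1$ with $s_1=s_2$. In that case $C_s=0$ and the statement still holds. Independence of the generators is needed only to compute the dimension $2^{n-k}$, which the proposition does not claim.
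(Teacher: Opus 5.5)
Your proposal is correct and follows the same route as the paper: each factor $(I+s_ig_i)/2$ is the spectral projector onto the $s_i$-eigenspace of $g_i$, these projectors commute, and their product therefore projects onto the intersection of the eigenspaces. You make explicit the steps the paper compresses into one line, namely the Hermiticity and idempotence of the commuting product and the two inclusions identifying its range with the joint eigenspace $E_s$.
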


\begin{proof}
Each factor is the spectral projector onto the $s_i$ eigenspace of $g_i$. Because the $g_i$ commute, their spectral projectors commute. The product is therefore the projector onto the intersection of these eigenspaces.
\end{proof}

\section{Experimental Design}

\subsection{Research Questions}

The experiments answer five questions. First, do QL projector contexts give information unavailable to diagonal classical features? Second, can a QL-TM recover known physical clauses such as Bell-state and syndrome propositions? Third, does the result generalize to randomized stabilizer contexts rather than hand-picked examples? Fourth, if the learner is given a mixed pool containing true, wrong, diagonal, and random distractor literals, does it select the true quantum-context generators? Fifth, if some true generators are removed entirely, does performance degrade according to the remaining stabilizer information?

\subsection{Tasks}

\textbf{Bell task.} The four classes are $|\Phi^+\rangle$, $|\Phi^-\rangle$, $|\Psi^+\rangle$, and $|\Psi^-\rangle$. The informative QL context contains $ZZ$ and $XX$ projectors. A diagonal baseline only sees $Z$-basis projectors and cannot distinguish the phase labels.

\textbf{Phase-flip syndrome task.} The four classes are no phase error and single-qubit $Z_1$, $Z_2$, or $Z_3$ errors on a GHZ-like state. The informative QL context contains $X_1X_2$ and $X_2X_3$ projectors. A diagonal $Z$ context is deliberately insufficient.

\textbf{Random stabilizer tasks.} We generate randomized $n$-qubit, $k$-generator stabilizer-style tasks. A full-rank binary parity matrix defines $k$ commuting Pauli generators after local basis maps $Z\mapsto X,Y,Z$. Each syndrome pattern gives one class, so $k=4$ yields 16 classes with chance accuracy $1/16=0.0625$. We compare the true QL context, a diagonal $Z$ context, and a wrong non-diagonal QL context.

\textbf{Mixed literal-pool task.} We form a larger candidate pool containing true context literals, wrong-context literals, diagonal $Z$ literals, and random Pauli distractors. This tests whether success depends on handing the learner only the correct context.

\textbf{Context-budget ablation.}
To test whether QL-TM succeeds only because a complete true context is supplied, we also consider partial-context tasks. For a $k$-generator stabilizer problem, only $b\leq k$ true generators are made available. If only $b$ true generators are present, the available projector features can distinguish at most $2^b$ syndrome groups out of $2^k$ classes. In a balanced task, the expected accuracy therefore follows the ladder $2^{b-k}$. This ablation checks whether performance degrades predictably when physically necessary quantum propositions are removed, rather than remaining perfect merely because the feature space is enlarged.

\subsection{Noise and Sampling}

For a pure state $\rho$, depolarizing noise is simulated as
\begin{equation}
    \rho_p = (1-p)\rho + p I/d,
\end{equation}
where $d=2^n$. We also simulate finite-shot estimates by binomial sampling of Born probabilities, readout noise by flipping binary outcomes with probability $r$, and coherent local perturbations by applying small random single-qubit rotations before measurement.

\subsection{Baselines and Metrics}

We report four models: ProjectorClauseMiner, TinyProjectorTsetlin, a nearest-prototype baseline, and a dependency-free ridge one-vs-rest classifier. The baselines separate projector-feature expressivity from a specific clause learner. We report accuracy, standard deviation across seeds/tasks, average literals per clause, true-generator coverage, and distractor literals selected.

\paragraph{Statistical reporting.}
Means and standard deviations are reported over the seeds/tasks in Table~\ref{tab:protocol}. We do not claim significance for small baseline differences; the principal claims concern large context separations, true-generator recovery, and the predicted $2^{b-k}$ context-budget trend. Formal paired significance testing is left to larger future studies.

Table~\ref{tab:notation} summarizes the notation used in the experiments and analysis. Exact/noiseless rows use Born probabilities directly. Finite-shot rows replace each probability by an empirical frequency from repeated Bernoulli trials, matching the way a commuting Pauli projector would be estimated from measurement counts.

\begin{table}[t]
\caption{Notation and experimental quantities.}
\label{tab:notation}
\centering
\begin{tabular}{ll}
\toprule
Symbol/quantity & Meaning \\
\midrule
$P_g^\pm$ & Pauli projector $(I\pm g)/2$ \\
$C_j$ & Commuting projector clause \\
$c_j(\rho)$ & Clause activation $\Tr(\rho C_j)$ \\
$k$ & Number of stabilizer generators/classes $2^k$ \\
True coverage & Fraction of true generators selected \\
Selected lit. & Average literals per learned class clause \\
Wrong QL & Non-diagonal projectors in the wrong basis \\
\bottomrule
\end{tabular}
\end{table}

\begin{table*}[t]
\caption{Experimental protocol and hyperparameters. Reported standard deviations are computed over the seeds and random tasks used by the corresponding experiment.}
\label{tab:protocol}
\centering
\resizebox{\textwidth}{!}{%
\begin{tabular}{ll}
\toprule
Quantity & Value \\
\midrule
Train/test split & Random stratified 70/30 train/test split using the package routine \texttt{train\_test\_split}; fixed seed per run \\
Bell/phase-flip context experiment & 80 samples per class, 10 seeds for feature-baseline results \\
Random stabilizer experiment & $n\in\{5,6\}$, $k=4$ generators, 16 classes, 6 random tasks, 3 seeds, 50 samples per class \\
Mixed literal-pool experiment & $n\in\{5,6\}$, $k=4$, 6 random tasks, 3 seeds, 60 samples per class \\
Context-budget ablation & $n\in\{5,6\}$, $k=4$, 6 random tasks, 3 seeds, 60 samples per class; all subsets for each true-generator budget $b\in\{0,1,2,3,4\}$ \\ 
Finite-shot/noise grid & Depolarizing $p\in\{0,0.05,0.1,0.2,0.4,0.6\}$; shots $\in\{\mathrm{exact},16,32,64,128,256\}$ \\
Additional noise & Coherent rotation scale $\in\{0,0.1,0.2,0.3\}$; readout noise $\in\{0,0.02,0.05\}$ \\
ProjectorClauseMiner & Minimum margin $0.03$; one positive clause per class in reported stabilizer experiments \\
TinyProjectorTsetlin & Automaton states parameter $N=16$; threshold $\tau=0.55$; 8 epochs for context/noise runs and 10 epochs for random/mixed stabilizer runs \\
Mixed-pool composition & True-context literals, wrong-context literals, diagonal $Z$ literals, and 32 random Pauli-observable distractors, each with $\pm$ projectors \\
Wrong QL context & Non-diagonal Pauli-projector context generated in an incompatible/random basis, excluding the true stabilizer generators \\
Implementation & NumPy, pandas, matplotlib; no dependency on TMU or pyTsetlinMachine \\
\bottomrule
\end{tabular}}
\end{table*}

\paragraph{Hyperparameter policy.}
Hyperparameters in Table~\ref{tab:protocol} were fixed within experiment families and were not tuned against the reported test sets. We use $N=16$, a common $\tau=0.55$, and short fixed epoch budgets because the goal is controlled clause recovery rather than optimized predictive performance. A systematic sensitivity analysis over $N$, $\tau$, clause multiplicity, and feedback schedules remains future work.

Consequently, the reported results should be interpreted as evidence for the
proposed clause semantics under a fixed prototype configuration, not as
evidence that these hyperparameters are optimal. A systematic sensitivity
analysis over $N$, $\tau$, clause multiplicity, and feedback schedules remains
future work.

\section{Results}

\subsection{Contextual Expressivity}

Table~\ref{tab:context} and Fig.~\ref{fig:context} show the basic separation. In the exact/noiseless setting, QL projector contexts attain 1.000 mean accuracy on the Bell and phase-flip tasks across all four evaluated models. Diagonal $Z$-only contexts collapse: Bell accuracy is about 0.50 because the diagonal context cannot distinguish the $+$ and $-$ Bell phases, and phase-flip accuracy is about 0.20--0.24 because the relevant syndrome is in an $X$ context.

\begin{figure}[t]
    \centering
    \includegraphics[width=\columnwidth]{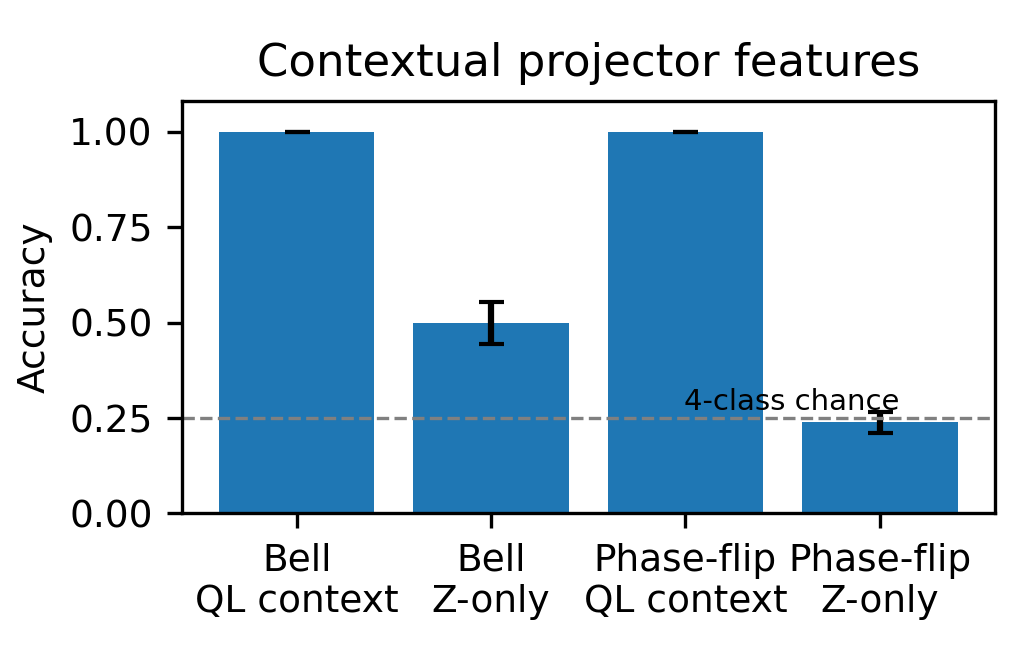}
    \caption{Context advantage for TinyProjectorTsetlin. Non-diagonal QL contexts solve the tasks, while diagonal $Z$-only contexts lose the relevant phase/syndrome information.}
    \label{fig:context}
\end{figure}

\begin{table}[t]
\caption{Context advantage across models. Values are mean $\pm$ standard deviation.}
\label{tab:context}
\centering
\resizebox{\columnwidth}{!}{%
\begin{tabular}{lcccc}
\toprule
Model & Bell QL & Bell $Z$ & Phase QL & Phase $Z$ \\
\midrule
ProjectorClauseMiner & 1.000$\pm$0.000 & 0.500$\pm$0.055 & 1.000$\pm$0.000 & 0.239$\pm$0.028 \\
TinyProjectorTsetlin & 1.000$\pm$0.000 & 0.500$\pm$0.055 & 1.000$\pm$0.000 & 0.239$\pm$0.028 \\
RidgeOneVsRest & 1.000$\pm$0.000 & 0.449$\pm$0.032 & 1.000$\pm$0.000 & 0.201$\pm$0.020 \\
NearestPrototype & 1.000$\pm$0.000 & 0.500$\pm$0.055 & 1.000$\pm$0.000 & 0.239$\pm$0.028 \\
\bottomrule
\end{tabular}}
\end{table}

The learned clauses are also physically meaningful. For the Bell task, both clause learners recover
\begin{align*}
|\Phi^+\rangle &: P_{ZZ}^{+}\wedge P_{XX}^{+}, &
|\Phi^-\rangle &: P_{ZZ}^{+}\wedge P_{XX}^{-},\\
|\Psi^+\rangle &: P_{ZZ}^{-}\wedge P_{XX}^{+}, &
|\Psi^-\rangle &: P_{ZZ}^{-}\wedge P_{XX}^{-}.
\end{align*}
For phase-flip syndromes, the recovered clauses are
\begin{align*}
\mathrm{None} &: P_{X_1X_2}^{+}\wedge P_{X_2X_3}^{+}, &
Z_1 &: P_{X_1X_2}^{-}\wedge P_{X_2X_3}^{+},\\
Z_2 &: P_{X_1X_2}^{-}\wedge P_{X_2X_3}^{-}, &
Z_3 &: P_{X_1X_2}^{+}\wedge P_{X_2X_3}^{-}.
\end{align*}
The corresponding diagonal clauses either merge Bell phase pairs or degenerate to \texttt{TRUE} for all phase-flip classes.

\subsection{Randomized Stabilizer Generalization}

Fig.~\ref{fig:random} and Table~\ref{tab:random} show the stronger randomized task. For $k=4$, the classifier must distinguish 16 syndrome classes. The true QL context achieves 1.000 accuracy for $n=5$ and $n=6$, while wrong QL contexts remain near chance. Diagonal $Z$ contexts sometimes extract weak accidental information but remain far below the true context.

\begin{figure}[t]
    \centering
    \includegraphics[width=\columnwidth]{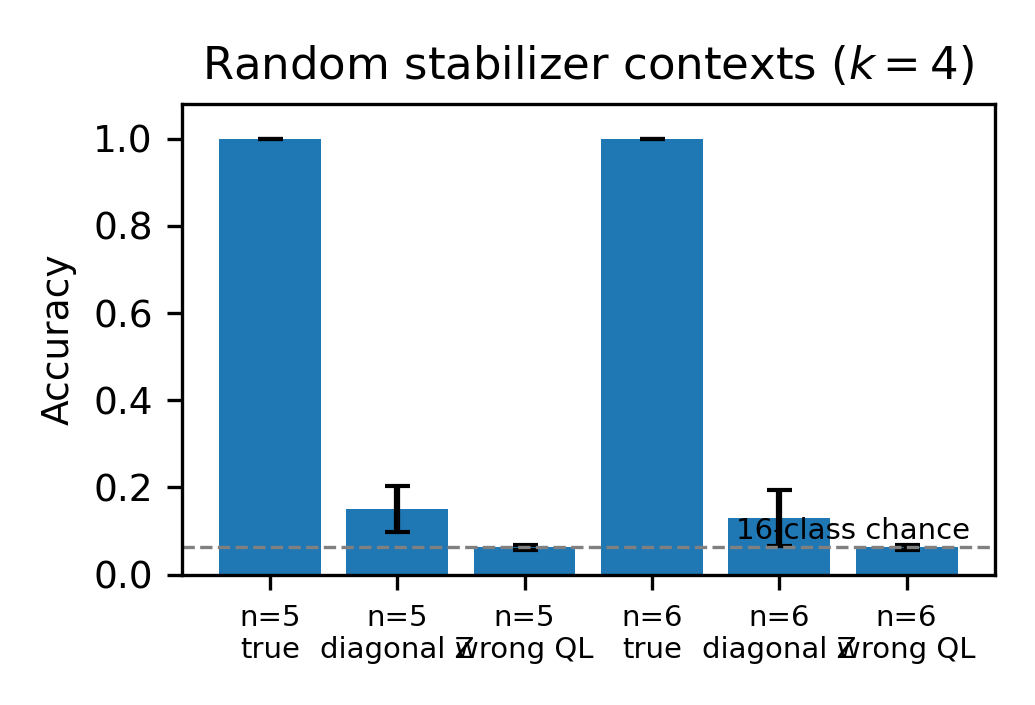}
    \caption{Random 16-class stabilizer tasks using TinyProjectorTsetlin. The correct QL context solves the task; wrong non-diagonal contexts remain at the chance level.}
    \label{fig:random}
\end{figure}

\begin{table}[t]
\caption{Random stabilizer results for TinyProjectorTsetlin, exact/noiseless setting, $k=4$ generators.}
\label{tab:random}
\centering
\begin{tabular}{lccc}
\toprule
$n$ & Feature set & Accuracy & Lit./clause \\
\midrule
5 & True QL & 1.000$\pm$0.000 & 4.00 \\
5 & Diagonal $Z$ & 0.150$\pm$0.052 & 2.50 \\
5 & Wrong QL & 0.062$\pm$0.006 & 0.00 \\
6 & True QL & 1.000$\pm$0.000 & 4.00 \\
6 & Diagonal $Z$ & 0.130$\pm$0.064 & 1.17 \\
6 & Wrong QL & 0.062$\pm$0.006 & 0.00 \\
\bottomrule
\end{tabular}
\end{table}

The representative random task clauses have the form
\begin{equation}
    s_{\sigma_1\sigma_2\sigma_3\sigma_4}:\quad
    P_{g_1}^{\sigma_1}\wedge P_{g_2}^{\sigma_2}\wedge P_{g_3}^{\sigma_3}\wedge P_{g_4}^{\sigma_4},
\end{equation}
which is exactly the expected stabilizer-syndrome proposition. This supports the claim that QL-TM is learning quantum-context clauses, not memorizing Bell-state labels.

\subsection{Mixed Literal Pools and Pruning}

The mixed-pool experiment asks whether the learner can handle a candidate library that contains true, wrong, diagonal, and random projector literals. Table~\ref{tab:mixed} shows that the mixed pool retains perfect accuracy for both $n=5$ and $n=6$ while achieving full true-generator coverage. When the true literals are removed, performance becomes unstable and substantially lower. Wrong contexts remain near chance.

\begin{table}[t]
\caption{Mixed-pool results for TinyProjectorTsetlin, exact/noiseless setting, $k=4$. Coverage is the fraction of true generators selected.}
\label{tab:mixed}
\centering
\resizebox{\columnwidth}{!}{%
\begin{tabular}{llccc}
\toprule
$n$ & Feature pool & Accuracy & Selected lit. & Coverage \\
\midrule
5 & True only & 1.000$\pm$0.000 & 4.00 & 1.00 \\
5 & Mixed pool & 1.000$\pm$0.000 & 7.00 & 1.00 \\
5 & Mixed w/o true & 0.419$\pm$0.427 & 3.00 & 0.00 \\
5 & Diagonal $Z$ & 0.186$\pm$0.155 & 2.00 & 0.00 \\
5 & Wrong QL & 0.049$\pm$0.011 & 0.00 & 0.00 \\
6 & True only & 1.000$\pm$0.000 & 4.00 & 1.00 \\
6 & Mixed pool & 1.000$\pm$0.000 & 7.83 & 1.00 \\
6 & Mixed w/o true & 0.351$\pm$0.331 & 3.83 & 0.00 \\
6 & Diagonal $Z$ & 0.179$\pm$0.067 & 3.33 & 0.00 \\
6 & Wrong QL & 0.049$\pm$0.011 & 0.00 & 0.00 \\
\bottomrule
\end{tabular}}
\end{table}

The mixed-pool learner sometimes retains redundant distractors in addition to the four true literals. Fig.~\ref{fig:prune} reports a small post-hoc pruning ablation: when redundant non-true literals are removed while retaining the true generators, clauses compress back to the $k=4$ stabilizer form with unchanged exact/noiseless accuracy. This suggests that explicit sparsification should be part of a future full-scale QL-TM implementation.

\begin{figure}[t]
    \centering
    \includegraphics[width=\columnwidth]{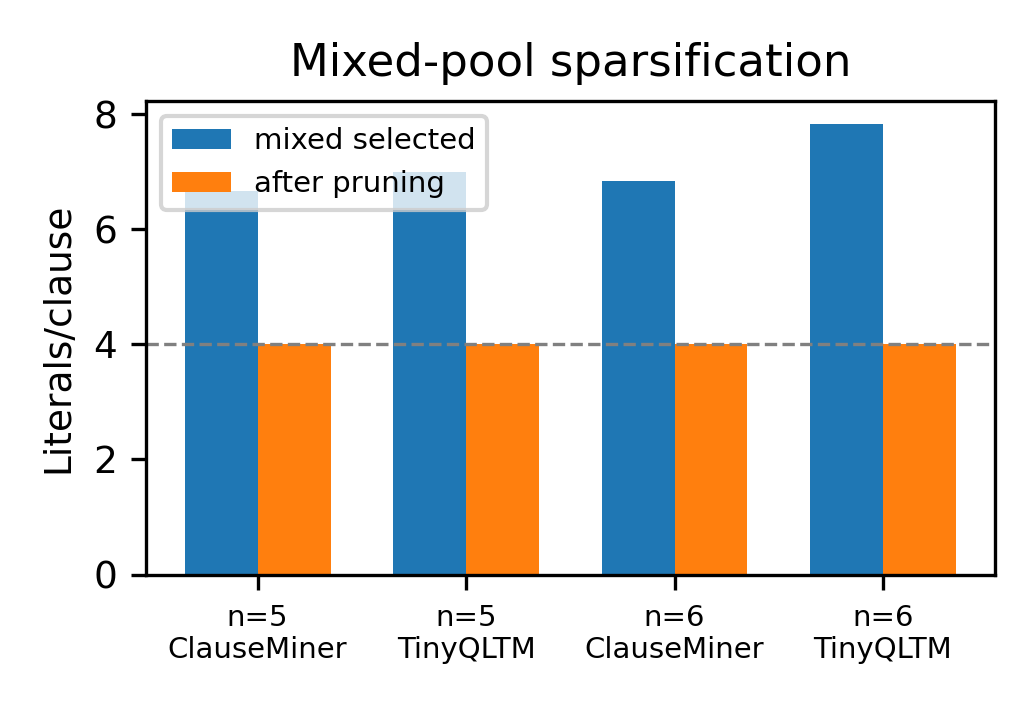}
    \caption{Post-hoc mixed-pool pruning. Mixed-pool clauses include all true stabilizer generators plus redundant distractors; pruning compresses clauses back to four literals while preserving accuracy in the exact/noiseless task.}
    \label{fig:prune}
\end{figure}

The next ablation asks the complementary question: if some true generators are removed entirely rather than merely hidden among distractors, does performance degrade according to the remaining syndrome information?

\subsection{Context-Budget Ablation}

The mixed-pool experiment shows that QL-TM can recover the true generators when they are present among distractors. A stricter question is whether performance degrades when some of the true quantum propositions are absent. Table~\ref{tab:budget} reports the context-budget ablation for $k=4$, where only $b$ of the four true stabilizer generators are available. The expected separability ladder is $2^{b-4}$: with $b=0$ the task is at the 16-class chance level, while $b=4$ restores full separability.

\begin{table}[t]
\caption{Context-budget ablation for TinyProjectorTsetlin, exact/noiseless setting, $k=4$. Values are mean $\pm$ standard deviation over 6 random tasks, 3 seeds, and all true-generator subsets at each budget.}
\label{tab:budget}
\centering
\resizebox{\columnwidth}{!}{%
\begin{tabular}{cccc}
\toprule
Available true generators $b/4$ & Expected & $n=5$ acc. & $n=6$ acc. \\
\midrule
0/4 & 0.0625 & 0.049$\pm$0.011 & 0.049$\pm$0.011 \\
1/4 & 0.1250 & 0.113$\pm$0.013 & 0.113$\pm$0.013 \\
2/4 & 0.2500 & 0.240$\pm$0.023 & 0.240$\pm$0.023 \\
3/4 & 0.5000 & 0.495$\pm$0.027 & 0.495$\pm$0.027 \\
4/4 & 1.0000 & 1.000$\pm$0.000 & 1.000$\pm$0.000 \\
\bottomrule
\end{tabular}}
\end{table}

The observed accuracies closely track the predicted $2^{b-k}$ separability ladder: near chance with no true generator, approximately $0.125$, $0.25$, and $0.5$ with one, two, and three generators, and 1.000 with all four. Because this hierarchy is determined by the available stabilizer information before learning, it provides controlled evidence that performance is governed by physically meaningful projector propositions rather than raw feature dimensionality.

\subsection{Finite-Shot and Noise Robustness}

The robustness grid varies depolarizing noise, finite shots, coherent local rotations, and readout noise. The most severe plotted slice uses depolarizing noise $p=0.6$, rotation scale 0.2, and readout noise 0.02. In this setting, TinyProjectorTsetlin still obtains 0.973 accuracy for Bell states at only 16 shots and 0.944 for phase-flip syndromes; by 64 shots both tasks recover 1.000 accuracy. Fig.~\ref{fig:noise} shows this behavior. The reason exact depolarizing noise is mild for these tasks is that it shrinks confidence without changing the stabilizer syndrome ordering; degradation appears when finite-shot and readout uncertainty make the estimated projector probabilities ambiguous.

\begin{figure}[t]
    \centering
    \includegraphics[width=\columnwidth]{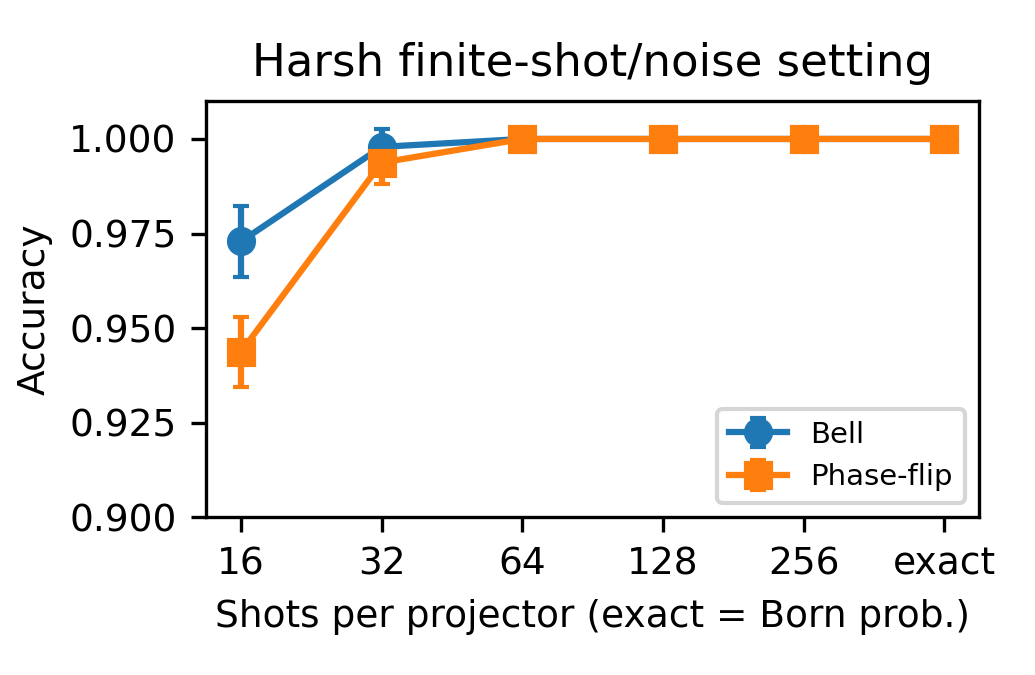}
    \caption{Finite-shot robustness under a harsh setting: depolarizing noise $p=0.6$, coherent rotation scale 0.2, and readout noise 0.02. Accuracy degrades mainly at very low shot counts.}
    \label{fig:noise}
\end{figure}

\section{Discussion}

Three scoped conclusions follow. First, the diagonal-reduction theorem identifies ordinary Boolean TM semantics as a special case, while the Bell, phase-flip, and randomized stabilizer experiments show that non-diagonal commuting contexts expose phase/syndrome information absent from diagonal features. Second, learned clauses have direct joint-eigenspace interpretations; mixed-pool experiments recover all true generators among distractors, although redundant literals motivate native sparsification. Third, the context-budget results follow the externally predicted $2^{b-k}$ hierarchy as true generators are removed, providing the strongest evidence that performance tracks quantum-context information rather than feature count.

The scope remains deliberately narrow. The tasks are Bell/GHZ/stabilizer based because their correct projector propositions are analytically known; they do not establish superiority over VQCs, quantum kernels, or other end-to-end QML models. Likewise, the product-of-marginals quantity is only a training surrogate; exact joint-context learning is not yet validated beyond the deterministic stabilizer setting.

\subsection{Measurement and Limitations}

Because each clause lies in one commuting context, its literals can in principle be estimated from compatible measurements, suggesting a future measurement-cost objective based on contexts and shots. The present study, however, is entirely simulated and does not model hardware calibration drift or device-specific measurement constraints. TinyProjectorTsetlin is a minimal prototype rather than a full score-gated production TM, no task-specific hyperparameter sweep or formal significance-testing campaign is reported, and less structured state families remain unevaluated. Future work should therefore study direct joint-projector feedback, context selection, multi-clause learning, native sparsity, hardware-derived data, and comparisons on broader QML tasks.

\section{Conclusion}

We introduced \QLTM, a hybrid quantum-state classifier whose Tsetlin-style literals are projectors and whose clauses are restricted to commuting contexts. The framework reduces exactly to Boolean TM clause semantics in diagonal contexts and yields stabilizer/syndrome semantics for Pauli projectors. Controlled experiments demonstrate contextual expressivity, randomized stabilizer generalization, true-literal recovery from mixed pools, the predicted context-budget degradation, and finite-shot robustness. These results establish a foundation for quantum-logic Tsetlin learning without claiming quantum advantage or state-of-the-art end-to-end QML performance.

\section*{Reproducibility Note}

Source code, raw results, figures, and learned clauses are available at \newline
\url{https://github.com/FraQTech/quantum-logic-tsetlin-machines}. Table~\ref{tab:protocol} records the data-generation settings, seeds, sample counts, and fixed hyperparameters used in the reported experiments.

\bibliographystyle{IEEEtran}
\bibliography{refs}

\end{document}